\newtheorem{theorem}{Theorem}
\newtheorem{axiom}[theorem]{Axiom}
\newtheorem{conjecture}[theorem]{Conjecture}
\newtheorem{corollary}[theorem]{Corollary}
\newtheorem{definition}[theorem]{Definition}
\newtheorem{example}[theorem]{Example}
\newtheorem{exercise}[theorem]{Exercise}
\newtheorem{lemma}[theorem]{Lemma}
\newtheorem{proposition}[theorem]{Proposition}
\newtheorem{remark}[theorem]{Remark}
\newenvironment{proof}[1][Proof]{\noindent\textbf{#1.} }{\ \rule{0.5em}{0.5em}}
\let\pdfoutput=\undefined\fi
\chardef\@x10\chardef\@xv60
\def\tcitime{
\def\@time{%
  \@minute\time\@hour\@minute\divide\@hour\@xv
  \ifnum\@hour<\@x 0\fi\the\@hour:%
  \multiply\@hour\@xv\advance\@minute-\@hour
  \ifnum\@minute<\@x 0\fi\the\@minute
  }}%
\def\x@hyperref#1#2#3{%
   \catcode`\~ = 12
   \catcode`\$ = 12
   \catcode`\_ = 12
   \catcode`\# = 12
   \catcode`\& = 12
   \y@hyperref{#1}{#2}{#3}%
}
\def\y@hyperref#1#2#3#4{%
   #2\ref{#4}#3
   \catcode`\~ = 13
   \catcode`\$ = 3
   \catcode`\_ = 8
   \catcode`\# = 6
   \catcode`\& = 4
}
\def\QCTOpt[#1]#2{%
  \def\QCTOptB{#1}
  \def\QCTOptA{#2}
}
\def\QCTNOpt#1{%
  \def\QCTOptA{#1}
  \let\QCTOptB\empty
}
\def\Qct{%
  \@ifnextchar[{%
    \QCTOpt}{\QCTNOpt}
}
\def\QCBOpt[#1]#2{%
  \def\QCBOptB{#1}%
  \def\QCBOptA{#2}%
}
\def\QCBNOpt#1{%
  \def\QCBOptA{#1}%
  \let\QCBOptB\empty
}
\def\Qcb{%
  \@ifnextchar[{%
    \QCBOpt}{\QCBNOpt}%
}
\def\PrepCapArgs{%
  \ifx\QCBOptA\empty
    \ifx\QCTOptA\empty
      {}%
    \else
      \ifx\QCTOptB\empty
        {\QCTOptA}%
      \else
        [\QCTOptB]{\QCTOptA}%
      \fi
    \fi
  \else
    \ifx\QCBOptA\empty
      {}%
    \else
      \ifx\QCBOptB\empty
        {\QCBOptA}%
      \else
        [\QCBOptB]{\QCBOptA}%
      \fi
    \fi
  \fi
}
\def\GRAPHICSPS#1{%
 \ifcase\GRAPHICSTYPE
   \special{ps: #1}%
 \or
   \special{language "PS", include "#1"}%
 \fi
}%
\def\graffile#1#2#3#4{%
    \bgroup
	   \@inlabelfalse
       \leavevmode
       \@ifundefined{bbl@deactivate}{\def~{\string~}}{\activesoff}%
        \raise -#4 \BOXTHEFRAME{%
           \hbox to #2{\raise #3\hbox to #2{\null #1\hfil}}}%
    \egroup
}%
\def\draftbox#1#2#3#4{%
 \leavevmode\raise -#4 \hbox{%
  \frame{\rlap{\protect\tiny #1}\hbox to #2%
   {\vrule height#3 width\z@ depth\z@\hfil}%
  }%
 }%
}%
\let\nographics=\@msidraft
\newif\ifwasdraft
\def\GRAPHIC#1#2#3#4#5{%
   \ifnum\@msidraft=\@ne\draftbox{#2}{#3}{#4}{#5}%
   \else\graffile{#1}{#3}{#4}{#5}%
   \fi
}
\def\addtoLaTeXparams#1{%
    \edef\LaTeXparams{\LaTeXparams #1}}%
\newif\ifBoxFrame \BoxFramefalse
\newif\ifOverFrame \OverFramefalse
\newif\ifUnderFrame \UnderFramefalse
\def\BOXTHEFRAME#1{%
   \hbox{%
      \ifBoxFrame
         \frame{#1}%
      \else
         {#1}%
      \fi
   }%
}
\def\doFRAMEparams#1{\BoxFramefalse\OverFramefalse\UnderFramefalse\readFRAMEparams#1\end}%
\def\readFRAMEparams#1{%
 \ifx#1\end%
  \let\next=\relax
  \else
  \ifx#1i\dispkind=\z@\fi
  \ifx#1d\dispkind=\@ne\fi
  \ifx#1f\dispkind=\tw@\fi
  \ifx#1t\addtoLaTeXparams{t}\fi
  \ifx#1b\addtoLaTeXparams{b}\fi
  \ifx#1p\addtoLaTeXparams{p}\fi
  \ifx#1h\addtoLaTeXparams{h}\fi
  \ifx#1X\BoxFrametrue\fi
  \ifx#1O\OverFrametrue\fi
  \ifx#1U\UnderFrametrue\fi
  \ifx#1w
    \ifnum\@msidraft=1\wasdrafttrue\else\wasdraftfalse\fi
    \@msidraft=\@ne
  \fi
  \let\next=\readFRAMEparams
  \fi
 \next
 }%
\def\IFRAME#1#2#3#4#5#6{%
      \bgroup
      \let\QCTOptA\empty
      \let\QCTOptB\empty
      \let\QCBOptA\empty
      \let\QCBOptB\empty
      #6%
      \parindent=0pt
      \leftskip=0pt
      \rightskip=0pt
      \setbox0=\hbox{\QCBOptA}%
      \@tempdima=#1\relax
      \ifOverFrame
          \typeout{This is not implemented yet}%
          \show\HELP
      \else
         \ifdim\wd0>\@tempdima
            \advance\@tempdima by \@tempdima
            \ifdim\wd0 >\@tempdima
               \setbox1 =\vbox{%
                  \unskip\hbox to \@tempdima{\hfill\GRAPHIC{#5}{#4}{#1}{#2}{#3}\hfill}%
                  \unskip\hbox to \@tempdima{\parbox[b]{\@tempdima}{\QCBOptA}}%
               }%
               \wd1=\@tempdima
            \else
               \textwidth=\wd0
               \setbox1 =\vbox{%
                 \noindent\hbox to \wd0{\hfill\GRAPHIC{#5}{#4}{#1}{#2}{#3}\hfill}\\%
                 \noindent\hbox{\QCBOptA}%
               }%
               \wd1=\wd0
            \fi
         \else
            \ifdim\wd0>0pt
              \hsize=\@tempdima
              \setbox1=\vbox{%
                \unskip\GRAPHIC{#5}{#4}{#1}{#2}{0pt}%
                \break
                \unskip\hbox to \@tempdima{\hfill \QCBOptA\hfill}%
              }%
              \wd1=\@tempdima
           \else
              \hsize=\@tempdima
              \setbox1=\vbox{%
                \unskip\GRAPHIC{#5}{#4}{#1}{#2}{0pt}%
              }%
              \wd1=\@tempdima
           \fi
         \fi
         \@tempdimb=\ht1
         \advance\@tempdimb by -#2
         \advance\@tempdimb by #3
         \leavevmode
         \raise -\@tempdimb \hbox{\box1}%
      \fi
      \egroup%
}%
\def\DFRAME#1#2#3#4#5{%
  \vspace\topsep
  \hfil\break
  \bgroup
     \leftskip\@flushglue
	 \rightskip\@flushglue
	 \parindent\z@
	 \parfillskip\z@skip
     \let\QCTOptA\empty
     \let\QCTOptB\empty
     \let\QCBOptA\empty
     \let\QCBOptB\empty
	 \vbox\bgroup
        \ifOverFrame 
           #5\QCTOptA\par
        \fi
        \GRAPHIC{#4}{#3}{#1}{#2}{\z@}%
        \ifUnderFrame 
           \break#5\QCBOptA
        \fi
	 \egroup
  \egroup
  \vspace\topsep
  \break
}%
\def\FFRAME#1#2#3#4#5#6#7{%
  \@ifundefined{floatstyle}
    {
     \begin{figure}[#1]%
    }
    {
	 \ifx#1h
      \begin{figure}[H]%
	 \else
      \begin{figure}[#1]%
	 \fi
	}
  \let\QCTOptA\empty
  \let\QCTOptB\empty
  \let\QCBOptA\empty
  \let\QCBOptB\empty
  \ifOverFrame
    #4
    \ifx\QCTOptA\empty
    \else
      \ifx\QCTOptB\empty
        \caption{\QCTOptA}%
      \else
        \caption[\QCTOptB]{\QCTOptA}%
      \fi
    \fi
    \ifUnderFrame\else
      \label{#5}%
    \fi
  \else
    \UnderFrametrue%
  \fi
  \begin{center}\GRAPHIC{#7}{#6}{#2}{#3}{\z@}\end{center}%
  \ifUnderFrame
    #4
    \ifx\QCBOptA\empty
      \caption{}%
    \else
      \ifx\QCBOptB\empty
        \caption{\QCBOptA}%
      \else
        \caption[\QCBOptB]{\QCBOptA}%
      \fi
    \fi
    \label{#5}%
  \fi
  \end{figure}%
 }%
\def\makeactives{
  \catcode`\"=\active
  \catcode`\;=\active
  \catcode`\:=\active
  \catcode`\'=\active
  \catcode`\~=\active
}
   \gdef\activesoff{%
      \def"{\string"}%
      \def;{\string;}%
      \def:{\string:}%
      \def'{\string'}%
      \def~{\string~}%
    }
\def\FRAME#1#2#3#4#5#6#7#8{%
 \bgroup
 \ifnum\@msidraft=\@ne
   \wasdrafttrue
 \else
   \wasdraftfalse%
 \fi
 \def\LaTeXparams{}%
 \dispkind=\z@
 \def\LaTeXparams{}%
 \doFRAMEparams{#1}%
 \ifnum\dispkind=\z@\IFRAME{#2}{#3}{#4}{#7}{#8}{#5}\else
  \ifnum\dispkind=\@ne\DFRAME{#2}{#3}{#7}{#8}{#5}\else
   \ifnum\dispkind=\tw@
    \edef\@tempa{\noexpand\FFRAME{\LaTeXparams}}%
    \@tempa{#2}{#3}{#5}{#6}{#7}{#8}%
    \fi
   \fi
  \fi
  \ifwasdraft\@msidraft=1\else\@msidraft=0\fi{}%
  \egroup
 }%
\def\TEXUX#1{"texux"}
\long\def\QQQ#1#2{%
     \long\expandafter\def\csname#1\endcsname{#2}}%
\long\def\QQA#1#2{}%
\def\QTR#1#2{{\csname#1\endcsname {#2}}}%
\def\EXPAND#1[#2]#3{}%
\def\NOEXPAND#1[#2]#3{}%
\def\LaTeXparent#1{}%
\def\ChildStyles#1{}%
\def\ChildDefaults#1{}%
\def\QTagDef#1#2#3{}%
  \providecommand{\UNICODE}[2][]{\protect\rule{.1in}{.1in}}
  \providecommand{\U}[1]{\protect\rule{.1in}{.1in}}
\def\QQfnmark#1{\footnotemark}
 \def\abstract{%
  \if@twocolumn
   \section*{Abstract (Not appropriate in this style!)}%
   \else \small 
   \begin{center}{\bf Abstract\vspace{-.5em}\vspace{\z@}}\end{center}%
   \quotation 
   \fi
  }%
   \def\registered{\relax\ifmmode{}\r@gistered
                    \else$\m@th\r@gistered$\fi}%
 \def\r@gistered{^{\ooalign
  {\hfil\raise.07ex\hbox{$\scriptstyle\rm\text{R}$}\hfil\crcr
  \mathhexbox20D}}}}{}%
\newdimen\theight
\def\newfmtname{LaTeX2e}
  \DeclareOldFontCommand{\rm}{\normalfont\rmfamily}{\mathrm}
  \DeclareOldFontCommand{\sf}{\normalfont\sffamily}{\mathsf}
  \DeclareOldFontCommand{\tt}{\normalfont\ttfamily}{\mathtt}
  \DeclareOldFontCommand{\bf}{\normalfont\bfseries}{\mathbf}
  \DeclareOldFontCommand{\it}{\normalfont\itshape}{\mathit}
  \DeclareOldFontCommand{\sl}{\normalfont\slshape}{\@nomath\sl}
  \DeclareOldFontCommand{\sc}{\normalfont\scshape}{\@nomath\sc}
\def\alpha{{\Greekmath 010B}}%
\def\beta{{\Greekmath 010C}}%
\def\gamma{{\Greekmath 010D}}%
\def\delta{{\Greekmath 010E}}%
\def\epsilon{{\Greekmath 010F}}%
\def\zeta{{\Greekmath 0110}}%
\def\eta{{\Greekmath 0111}}%
\def\theta{{\Greekmath 0112}}%
\def\iota{{\Greekmath 0113}}%
\def\kappa{{\Greekmath 0114}}%
\def\lambda{{\Greekmath 0115}}%
\def\mu{{\Greekmath 0116}}%
\def\nu{{\Greekmath 0117}}%
\def\xi{{\Greekmath 0118}}%
\def\pi{{\Greekmath 0119}}%
\def\rho{{\Greekmath 011A}}%
\def\sigma{{\Greekmath 011B}}%
\def\tau{{\Greekmath 011C}}%
\def\upsilon{{\Greekmath 011D}}%
\def\phi{{\Greekmath 011E}}%
\def\chi{{\Greekmath 011F}}%
\def\psi{{\Greekmath 0120}}%
\def\omega{{\Greekmath 0121}}%
\def\varepsilon{{\Greekmath 0122}}%
\def\vartheta{{\Greekmath 0123}}%
\def\varpi{{\Greekmath 0124}}%
\def\varrho{{\Greekmath 0125}}%
\def\varsigma{{\Greekmath 0126}}%
\def\varphi{{\Greekmath 0127}}%
\def\nabla{{\Greekmath 0272}}
\def\FindBoldGroup{%
   {\setbox0=\hbox{$\mathbf{x\global\edef\theboldgroup{\the\mathgroup}}$}}%
}
\def\Greekmath#1#2#3#4{%
    \if@compatibility
        \ifnum\mathgroup=\symbold
           \mathchoice{\mbox{\boldmath$\displaystyle\mathchar"#1#2#3#4$}}%
                      {\mbox{\boldmath$\textstyle\mathchar"#1#2#3#4$}}%
                      {\mbox{\boldmath$\scriptstyle\mathchar"#1#2#3#4$}}%
                      {\mbox{\boldmath$\scriptscriptstyle\mathchar"#1#2#3#4$}}%
        \else
           \mathchar"#1#2#3#4%
        \fi 
    \else 
        \FindBoldGroup
        \ifnum\mathgroup=\theboldgroup 
           \mathchoice{\mbox{\boldmath$\displaystyle\mathchar"#1#2#3#4$}}%
                      {\mbox{\boldmath$\textstyle\mathchar"#1#2#3#4$}}%
                      {\mbox{\boldmath$\scriptstyle\mathchar"#1#2#3#4$}}%
                      {\mbox{\boldmath$\scriptscriptstyle\mathchar"#1#2#3#4$}}%
        \else
           \mathchar"#1#2#3#4%
        \fi     	    
	  \fi}
\newif\ifGreekBold  \GreekBoldfalse
\let\SAVEPBF=\pbf
\def\pbf{\GreekBoldtrue\SAVEPBF}%
  \newcounter{equationnumber}  
  \def\mathletters{%
     \addtocounter{equation}{1}
     \edef\@currentlabel{\theequation}%
     \setcounter{equationnumber}{\c@equation}
     \setcounter{equation}{0}%
     \edef\theequation{\@currentlabel\noexpand\alph{equation}}%
  }
    \def\BibTeX{{\rm B\kern-.05em{\sc i\kern-.025em b}\kern-.08em
                 T\kern-.1667em\lower.7ex\hbox{E}\kern-.125emX}}}{}%
\def\AmS{{\protect\usefont{OMS}{cmsy}{m}{n}%
                A\kern-.1667em\lower.5ex\hbox{M}\kern-.125emS}}}{}%
\def\@@eqncr{\let\@tempa\relax
    \ifcase\@eqcnt \def\@tempa{& & &}\or \def\@tempa{& &}%
      \else \def\@tempa{&}\fi
     \@tempa
     \if@eqnsw
        \iftag@
           \@taggnum
        \else
           \@eqnnum\stepcounter{equation}%
        \fi
     \fi
     \global\tag@false
     \global\@eqnswtrue
     \global\@eqcnt\z@\cr}
\def\TCItag{\@ifnextchar*{\@TCItagstar}{\@TCItag}}
\def\@TCItag#1{%
    \global\tag@true
    \global\def\@taggnum{(#1)}%
    \global\def\@currentlabel{#1}}
\def\@TCItagstar*#1{%
    \global\tag@true
    \global\def\@taggnum{#1}%
    \global\def\@currentlabel{#1}}
\def\ExitTCILatex{\makeatother }
\if@compatibility\message{amsmath already loaded}\fi\aftergroup\ExitTCILatex}
\if@compatibility\message{amstex already loaded}\fi\aftergroup\ExitTCILatex}
\if@compatibility\message{amsgen already loaded}\fi\aftergroup\ExitTCILatex}
\let\DOTSI\relax
\def\RIfM@{\relax\ifmmode}%
\def\FN@{\futurelet\next}%
\def\iint{\DOTSI\intno@\tw@\FN@\ints@}%
\def\iiint{\DOTSI\intno@\thr@@\FN@\ints@}%
\def\iiiint{\DOTSI\intno@4 \FN@\ints@}%
\def\idotsint{\DOTSI\intno@\z@\FN@\ints@}%
\def\ints@{\findlimits@\ints@@}%
\newif\iflimtoken@
\newif\iflimits@
\def\findlimits@{\limtoken@true\ifx\next\limits\limits@true
 \else\ifx\next\nolimits\limits@false\else
 \limtoken@false\ifx\ilimits@\nolimits\limits@false\else
 \ifinner\limits@false\else\limits@true\fi\fi\fi\fi}%
\def\multint@{\int\ifnum\intno@=\z@\intdots@                          
 \else\intkern@\fi                                                    
 \ifnum\intno@>\tw@\int\intkern@\fi                                   
 \ifnum\intno@>\thr@@\int\intkern@\fi                                 
 \int}
\def\multintlimits@{\intop\ifnum\intno@=\z@\intdots@\else\intkern@\fi
 \ifnum\intno@>\tw@\intop\intkern@\fi
 \ifnum\intno@>\thr@@\intop\intkern@\fi\intop}%
\def\intic@{%
    \mathchoice{\hskip.5em}{\hskip.4em}{\hskip.4em}{\hskip.4em}}%
\def\negintic@{\mathchoice
 {\hskip-.5em}{\hskip-.4em}{\hskip-.4em}{\hskip-.4em}}%
\def\ints@@{\iflimtoken@                                              
 \def\ints@@@{\iflimits@\negintic@
   \mathop{\intic@\multintlimits@}\limits                             
  \else\multint@\nolimits\fi                                          
  \eat@}
 \else                                                                
 \def\ints@@@{\iflimits@\negintic@
  \mathop{\intic@\multintlimits@}\limits\else
  \multint@\nolimits\fi}\fi\ints@@@}%
\def\intkern@{\mathchoice{\!\!\!}{\!\!}{\!\!}{\!\!}}%
\def\plaincdots@{\mathinner{\cdotp\cdotp\cdotp}}%
\def\intdots@{\mathchoice{\plaincdots@}%
 {{\cdotp}\mkern1.5mu{\cdotp}\mkern1.5mu{\cdotp}}%
 {{\cdotp}\mkern1mu{\cdotp}\mkern1mu{\cdotp}}%
 {{\cdotp}\mkern1mu{\cdotp}\mkern1mu{\cdotp}}}%
\def\RIfM@{\relax\protect\ifmmode}
\def\text{\RIfM@\expandafter\text@\else\expandafter\mbox\fi}
\let\nfss@text\text
\def\text@#1{\mathchoice
   {\textdef@\displaystyle\f@size{#1}}%
   {\textdef@\textstyle\tf@size{\firstchoice@false #1}}%
   {\textdef@\textstyle\sf@size{\firstchoice@false #1}}%
   {\textdef@\textstyle \ssf@size{\firstchoice@false #1}}%
   \glb@settings}
\def\textdef@#1#2#3{\hbox{{%
                    \everymath{#1}%
                    \let\f@size#2\selectfont
                    #3}}}
\newif\iffirstchoice@
\def\Let@{\relax\iffalse{\fi\let\\=\cr\iffalse}\fi}%
\def\vspace@{\def\vspace##1{\crcr\noalign{\vskip##1\relax}}}%
\def\multilimits@{\bgroup\vspace@\Let@
 \baselineskip\fontdimen10 \scriptfont\tw@
 \advance\baselineskip\fontdimen12 \scriptfont\tw@
 \lineskip\thr@@\fontdimen8 \scriptfont\thr@@
 \lineskiplimit\lineskip
 \vbox\bgroup\ialign\bgroup\hfil$\m@th\scriptstyle{##}$\hfil\crcr}%
\def\Sb{_\multilimits@}%
\def\endSb{\crcr\egroup\egroup\egroup}%
\def\Sp{^\multilimits@}%
\newdimen\ex@
\def\rightarrowfill@#1{$#1\m@th\mathord-\mkern-6mu\cleaders
 \hbox{$#1\mkern-2mu\mathord-\mkern-2mu$}\hfill
 \mkern-6mu\mathord\rightarrow$}%
\def\leftarrowfill@#1{$#1\m@th\mathord\leftarrow\mkern-6mu\cleaders
 \hbox{$#1\mkern-2mu\mathord-\mkern-2mu$}\hfill\mkern-6mu\mathord-$}%
\def\leftrightarrowfill@#1{$#1\m@th\mathord\leftarrow
\mkern-6mu\cleaders
 \hbox{$#1\mkern-2mu\mathord-\mkern-2mu$}\hfill
 \mkern-6mu\mathord\rightarrow$}%
\def\overrightarrow{\mathpalette\overrightarrow@}%
\def\overrightarrow@#1#2{\vbox{\ialign{##\crcr\rightarrowfill@#1\crcr
 \noalign{\kern-\ex@\nointerlineskip}$\m@th\hfil#1#2\hfil$\crcr}}}%
\def\overleftarrow{\mathpalette\overleftarrow@}%
\def\overleftarrow@#1#2{\vbox{\ialign{##\crcr\leftarrowfill@#1\crcr
 \noalign{\kern-\ex@\nointerlineskip}$\m@th\hfil#1#2\hfil$\crcr}}}%
\def\overleftrightarrow{\mathpalette\overleftrightarrow@}%
\def\overleftrightarrow@#1#2{\vbox{\ialign{##\crcr
   \leftrightarrowfill@#1\crcr
 \noalign{\kern-\ex@\nointerlineskip}$\m@th\hfil#1#2\hfil$\crcr}}}%
\def\underrightarrow{\mathpalette\underrightarrow@}%
\def\underrightarrow@#1#2{\vtop{\ialign{##\crcr$\m@th\hfil#1#2\hfil
  $\crcr\noalign{\nointerlineskip}\rightarrowfill@#1\crcr}}}%
\def\underleftarrow{\mathpalette\underleftarrow@}%
\def\underleftarrow@#1#2{\vtop{\ialign{##\crcr$\m@th\hfil#1#2\hfil
  $\crcr\noalign{\nointerlineskip}\leftarrowfill@#1\crcr}}}%
\def\underleftrightarrow{\mathpalette\underleftrightarrow@}%
\def\underleftrightarrow@#1#2{\vtop{\ialign{##\crcr$\m@th
  \hfil#1#2\hfil$\crcr
 \noalign{\nointerlineskip}\leftrightarrowfill@#1\crcr}}}%
\def\qopnamewl@#1{\mathop{\operator@font#1}\nlimits@}
\let\nlimits@\displaylimits
\def\setboxz@h{\setbox\z@\hbox}
\def\varlim@#1#2{\mathop{\vtop{\ialign{##\crcr
 \hfil$#1\m@th\operator@font lim$\hfil\crcr
 \noalign{\nointerlineskip}#2#1\crcr
 \noalign{\nointerlineskip\kern-\ex@}\crcr}}}}
 \def\rightarrowfill@#1{\m@th\setboxz@h{$#1-$}\ht\z@\z@
  $#1\copy\z@\mkern-6mu\cleaders
  \hbox{$#1\mkern-2mu\box\z@\mkern-2mu$}\hfill
  \mkern-6mu\mathord\rightarrow$}
\def\leftarrowfill@#1{\m@th\setboxz@h{$#1-$}\ht\z@\z@
  $#1\mathord\leftarrow\mkern-6mu\cleaders
  \hbox{$#1\mkern-2mu\copy\z@\mkern-2mu$}\hfill
  \mkern-6mu\box\z@$}
\def\projlim{\qopnamewl@{proj\,lim}}
\def\injlim{\qopnamewl@{inj\,lim}}
\def\varinjlim{\mathpalette\varlim@\rightarrowfill@}
\def\varprojlim{\mathpalette\varlim@\leftarrowfill@}
\def\varliminf{\mathpalette\varliminf@{}}
\def\varliminf@#1{\mathop{\underline{\vrule\@depth.2\ex@\@width\z@
   \hbox{$#1\m@th\operator@font lim$}}}}
\def\varlimsup{\mathpalette\varlimsup@{}}
\def\varlimsup@#1{\mathop{\overline
  {\hbox{$#1\m@th\operator@font lim$}}}}
\def\align{\@verbatim \frenchspacing\@vobeyspaces \@alignverbatim
You are using the "align" environment in a style in which it is not defined.}
\let\csname endalign*\endcsname =\endtrivlist
\def\alignat{\@verbatim \frenchspacing\@vobeyspaces \@alignatverbatim
You are using the "alignat" environment in a style in which it is not defined.}
\let\csname endalignat*\endcsname =\endtrivlist
\def\xalignat{\@verbatim \frenchspacing\@vobeyspaces \@xalignatverbatim
You are using the "xalignat" environment in a style in which it is not defined.}
\let\csname endxalignat*\endcsname =\endtrivlist
\def\gather{\@verbatim \frenchspacing\@vobeyspaces \@gatherverbatim
You are using the "gather" environment in a style in which it is not defined.}
\let\csname endgather*\endcsname =\endtrivlist
\def\multiline{\@verbatim \frenchspacing\@vobeyspaces \@multilineverbatim
You are using the "multiline" environment in a style in which it is not defined.}
\let\csname endmultiline*\endcsname =\endtrivlist
\def\arrax{\@verbatim \frenchspacing\@vobeyspaces \@arraxverbatim
You are using a type of "array" construct that is only allowed in AmS-LaTeX.}
\def\tabulax{\@verbatim \frenchspacing\@vobeyspaces \@tabulaxverbatim
You are using a type of "tabular" construct that is only allowed in AmS-LaTeX.}
\let\csname endarrax*\endcsname =\endtrivlist
\let\csname endtabulax*\endcsname =\endtrivlist
 \def\endequation{%
     \ifmmode\ifinner 
      \iftag@
        \addtocounter{equation}{-1} 
        $\hfil
           \displaywidth\linewidth\@taggnum\egroup \endtrivlist
        \global\tag@false
        \global\@ignoretrue   
      \else
        $\hfil
           \displaywidth\linewidth\@eqnnum\egroup \endtrivlist
        \global\tag@false
        \global\@ignoretrue 
      \fi
     \else   
      \iftag@
        \addtocounter{equation}{-1} 
        \eqno \hbox{\@taggnum}
        \global\tag@false%
        $$\global\@ignoretrue
      \else
        \eqno \hbox{\@eqnnum}
        $$\global\@ignoretrue
      \fi
     \fi\fi
 } 
 \newif\iftag@ \tag@false
 \def\TCItag{\@ifnextchar*{\@TCItagstar}{\@TCItag}}
 \def\@TCItag#1{%
     \global\tag@true
     \global\def\@taggnum{(#1)}%
     \global\def\@currentlabel{#1}}
 \def\@TCItagstar*#1{%
     \global\tag@true
     \global\def\@taggnum{#1}%
     \global\def\@currentlabel{#1}}
     \def\tag{\@ifnextchar*{\@tagstar}{\@tag}}
     \def\@tag#1{%
         \global\tag@true
         \global\def\@taggnum{(#1)}}
     \def\@tagstar*#1{%
         \global\tag@true
         \global\def\@taggnum{#1}}
\begin{document}

\title{Establishing Cryptocurrency Equilibria Through Game Theory}

\author{Carey Caginalp$^{a,b}$ and Gunduz Caginalp$^{a}$}
\date{$^{a}$University of Pittsburgh, Mathematics Department, 301 Thackeray Hall, Pittsburgh, PA, USA\\
$^{b}$Chapman University, Economic Science Institute, 1 University Drive, Orange, CA, USA\\
\today}
\affiliation{University of Pittsburgh, Mathematics Department, 301 Thackeray Hall, Pittsburgh, PA, USA}
\affiliation{Chapman University, Economic Science Institute, 1 University Drive, Orange, CA, USA}
\maketitle

\begin{abstract}We utilize optimization methods to determine equilibria of
cryptocurrencies. A core group, the wealthy, fears the loss of assets that can be seized
by a government. Volatility may be influenced by speculators. The wealthy must divide their
assets between the home currency and the cryptocurrency, while the
government decides the probability of seizing a fraction the assets of
this group. We establish conditions for existence and uniqueness of
Nash equilibria. Also examined is the separate timescale problem
in which the government policy cannot be reversed, while the wealthy can
adjust their allocation in reaction to the government's designation of
probability.
\end{abstract}

Cryptocurrencies have evolved into a new speculative asset form that differs
from others in that most represent no intrinsic value; they cannot be redeemed
by a financial institution for any amount \cite{BER17}. The roller-coaster
ride of Bitcoin prices\footnote{In March 2019, Bitcoin hovered near \$3,800.}
from \$6,000 to \$20,000 back to \$6,000, with bounces in between, all during
the period from October 2017 to July 2018, was shadowed by other major
cryptocurrencies \cite{CUR17}. This has been accompanied by the general
feeling in government, business and academia that the speculative fever is of
concern only to those who own the cryptocurrencies. There is some
justification for this perspective as the total market capitalization of all
cryptocurrencies is now only about \$131 billion, so that large moves in the
cryptocurrency price are not likely to have a significant impact on the
world's stock and bond markets. However, this impact will present a
significant risk to the world's markets if the market capitalization of the
cryptocurrencies increases significantly. During the dramatic round trip of
Bitcoin between \$6,000 to \$20,000, the market capitalization of all
cryptocurrencies nearly doubled in six months. Moreover, 10,000 Bitcoins were
used to purchase two pizzas in the first transaction in 2010 \cite{PRI17}. If
people gradually become more comfortable with cryptocurrencies, as they did
with internet shopping, it is likely that the market capitalization could grow
to a few percent of the \$75 trillion Gross World Product (GWP) as the
fraction of the world's savings that is under threat by government seizure,
high inflation, etc., is certainly at least this fraction (as discussed
further below). At this point, large price changes in cryptocurrencies would
likely have an impact on the broader markets.

It is thus important to understand the factors behind the market
capitalization and price of cryptocurrencies. With all other assets there is
some theoretical methodology to determine the value, which is at least a first
step estimating the trading price. For example, the value of a stock is
assessed by measures such as the expected dividend stream (see Graham
\cite{g}, \cite{gm}, Luenberger \cite{LUE97}, Bodie et al. \cite{bkm}, and
Wolpert et. al. \cite{JOS12}). Even beyond these calculations, a shareholder
is \textit{de facto} part owner of a corporation, and shareholders can -- and
do -- collectively exercise their rights assured by law. By contrast, a
typical cryptocurrency does not assure the owner of any rights. Furthermore,
there is no corporate governance at all. The "miners and developers" -- whose
names are usually not disclosed -- get together and decide essentially on the
supply (e.g., by introducing a related cryptocurrency that they term a
"fork"). Unlike corporate actions in which shareholders can demand a vote,
e.g., for directors via a proxy battle, it is not even clear which, if any,
nation's laws apply. Thus, the absence of an intrinsic value of a
cryptocurrency means that the usual traditional finance methods, such as those
introduced by Graham (\cite{g}, \cite{gm}), are inapplicable.

Our analysis begins with a game theoretic examination of the motivations of
three groups that are the key players. For a core group, the basic need for a
cryptocurrency arises from the inadequacy of the home currency and banking
system \cite{RAP18}. There are also a significant number of people who are not
able to obtain a credit card or even open a bank account in the US, for
example \cite{WHI17}. In many countries, owning large amounts of the currency
can present a significant risk. There is the possibility of expropriation by
the government, sometimes in the guise of a corruption probe. The government
could institute policies in which inflation is very high, e.g., the extreme
example in Venezuela \cite{GIL17} where hyperinflation decimated any
individual savings. Onerous taxes can be placed by the government on the
wealthy. Thus a group of people in the world have rational reasons to replace
their country's currency with one that is outside the control of their
government or financial institutions, even if it presents some risk. Once it is
transferred to cryptocurrency, they would have the option of buying a more
reliable currency or asset in another country. We denote this group by
$\mathcal{W}$ (the "wealthy"). Returning to the point made above, there is a
substantial amount of the world's wealth (including individual's whose assets
are not large) that is in this situation. Many of these people, however, are
not yet comfortable with or knowledgeable about cryptocurrencies. As they feel
more comfortable, a greater fraction of this wealth may move into
cryptocurrencies, inflating the market capitalization, perhaps to a few
percent of the world's GWP.

The second group, $\mathcal{D}$, represents a government that is totalitarian,
at least with respect to monetary policy, so that its citizens are not free to
transfer their wealth into other, more reliable national currencies. There is
a probability, $p,$ that the government can initiate policies that will
deprive citizens of a fraction $k$ of their wealth, e.g., by printing money.
This possibility is noted by the wealthy, $\mathcal{W}$, who must make a
decision on the fraction of their assets, denoted $1-x$ held in the home
currency and the remainder, $x$, in cryptocurrency, which presents risks of
its own due to the volatility. The government, $\mathcal{D}$, exhibits risk
aversion as with any financial entity. After all, its existence is dependent
on obtaining funds from its citizens. A third group, $\mathcal{S}$, consists
of the speculators\footnote{In commodities such as gold and oil, there are
producers and industrial users who must trade. In cryptocurrencies, there are
no end users other than $\mathcal{W}$ \ who are trade infrequently, unlike
industrial users for gold, for example, who are perpetually trading.} whose
sole reason to trade is to profit from the transaction at the expense of the
less knowledgeable group, $\mathcal{W}$. In a typical situation, the a member
of $\mathcal{W}$ is trading for the first or second time -- having made their
money in another endeavor -- while the speculators are professionals who have
made thousands of trades, and make their living at the expense of novice
traders. The speculators effectively determine the volatility (see Appendix).
Note, however, that our analysis would be similar if the volatility were an
exogenous variable that is set arbitrarily. While $\mathcal{D}$ can set the
probability, $p$, with which the assets can be seized, group $\mathcal{W}$ can
decide what fraction, $x,$ to convert into the cryptocurrency.

We model this situation to find equilibria in two different ways. The first is
to find the Nash equilibrium \cite{m}, \cite{t}, \cite{MIL90}, \cite{SMI05},
\cite{ca} which is defined as the point $\left(  p^{\ast},x^{\ast}\right)  $
such that neither party can improve its fortunes by unilateral action. The
underlying assumption is that both parties, $\mathcal{W}$ and $\mathcal{D}$,
are aware of the situation faced by the other, so that they can simultaneously
self-optimize while assuming that the other party does likewise.

In a later section, we utilize the more realistic assumption that while
$\mathcal{W}$ can make immediate changes (e.g., one day), $\mathcal{D}$ must
make a decision that is irrevocable during a longer time (e.g., one year) as
policies (e.g., creating inflation, imposing onerous taxes) are implemented.
But in doing so, $\mathcal{D}$ must be aware that $\mathcal{W}$ will
self-optimize in its choice of $x$, knowing $p$. Thus, both parties are aware
of the different time scales involved in anticipating the other party's decision.

The methods we present in this paper are aimed at determining the demand using
optimization. The investors of the cryptocurrency do not have any clear idea
how much is the right amount to pay per unit of the cryptocurrency, so that
the demand will determine the trading price as discussed in Appendix A and
\ \cite{cc}. In a setting in which there is one cryptocurrency with a fixed
supply, the price will be determined as%
\begin{equation}
\text{Equilibrium Trading Price}=\frac{\text{Demand in Dollars}}{\text{Number
of Units}}. \label{demand}%
\end{equation}
Analogous methodology can be utilized for multiple cryptocurrencies.

\bigskip

One might be led to examining cryptocurrencies in the context of monetary
policy, but the terminology "currency" is the main similarity between the
cryptocurrencies and the US Dollar, Euro, Yen, etc. The differences are
profound\footnote{This point is probably clear to anyone who bought Bitcoin at
$\$20,000$ and sold it at $\$6,000$ several months later.}. Major currencies
are established by governments within a well-defined process that is governed
by law. The identities of those who are responsible for monetary policy are
known. The citizens of the country can influence the representatives who
appoint the monetary officials. Finally, if the citizens feel that the
direction of monetary policy is not in their interest, they can elect new
representatives. There have been many currencies throughout the world, and it
is no accident that the most viable currencies have been those of the
countries with the most respect for the law and the voices of their citizens.

Thus, the theory of monetary policy will be of limited use in the
understanding today's cryptocurrencies. The aspect of our analysis that is
closest to monetary policy involves the actions of the government,
$\mathcal{D}$, which must make a decision on issues such as generating
inflation (see, e.g., \cite{B}, \cite{BG}). Of course, the government in our
analysis is one that is very different from the major democratic governments
that have the more reliable currencies.

Since the widespread use of cryptocurrencies is a fairly new phenomenon, the
literature is also recent. Many papers have focussed on the blockchain
technology and its potential for increased speed and safety of transactions.
The introduction of JP\ Morgan's JPM Coin (see Appendix A) is an example of
utilization of this technology without any new economic issues, since
JPM\ Coin would be redeemable in US\ Dollars. The economics ofcryptocurrencies
have been discussed in terms of legal issues \cite{JV}, valuation
\cite{CR}, security issues \cite{BR}, \cite{CU} and stability \cite{cac},
\cite{IKMS}, \cite{LT} , \cite{OCK} and feasibility \cite{CG}. \ Experiments
have also been used to study cryptocurrencies and related issues \cite{CM},
\cite{GW}.

Our analysis can be viewed in the more general setting of an asset that is
easily traded and out of the reach of the state and other entities. However,
the popularity of cryptocurrencies may indicate that there are not so many of
these. Traditionally, gold has been used as a haven, but it is not always easy
to prevent theft. Nevertheless, followers have often noted spikes in gold when
there is political uncertainty in the highly populated and less developed
countries. Also, the demand for gold depends upon other factors such as
industrial use.
\section*{The Utility Functions of the Groups}

The general framework for
this section will be to write the utility functions of the three groups,
modeled on portfolio theory \cite{LUE97}, \cite{bkm} whereby one seeks allocate resources
to maximize return while minimizing risk. The general form for a basic
utility function is $U=m-d^{2}\sigma ^{2}$ where $m$ and $\sigma ^{2}$ are
the expectation and variance of the outcome, while the parameter $d^{2}$
quantifies the risk aversion.

The speculators,\ $\mathcal{S}$, are assumed to have an influence on the 
volatility and risk. Even if they had no influence on the volatility, they are 
likely to profit at the expense of $\mathcal{W}$, who are
likely to be novices. Hence, the role of $\mathcal{S}$ is secondary (and
discussed in the Appendix) as they create an expected loss and a variance
for $\mathcal{W}$.

Focusing now on the groups, $\mathcal{W}$ and $\mathcal{D}$, we assume that $%
\mathcal{W}$ has a choice between the home currency,~F, and a
cryptocurrency, Y. Any money held in F faces the risk that a fixed fraction $%
k\in \left[ 0,1\right] $ will be seized by $\mathcal{D}$ with probability $p.
$ Thus, the outcome will be $\left( 1-x\right) \left( 1-k\right) $ with
probability $p$ and $1$ with probability $1-p.$ Letting $m_{F}$ and $\sigma
_{F}^{2}$ denote the mean and variance of the investment in F, one finds,

\begin{eqnarray}
m_{F} &=&\left( 1-k\right) p+1\cdot \left( 1-p\right) =1-kp,  \nonumber \\
\sigma _{F}^{2} &=&k^{2}p\left( 1-p\right) .  \label{meanF}
\end{eqnarray}%
For the investment in Y, we let $m_{Y}$ and $\sigma _{Y}^{2}$ denote the
mean and variance that will be determined by the speculators (see Appendix).

The utility function for $\mathcal{W}$ with the fraction $x\in \left[ 0,1%
\right] $ of its assets in Y and the remainder in F can then be expressed as%
\begin{eqnarray}
U_{W} &=&m-d^{2}\sigma ^{2}  \label{uw} \\
&=&xm_{Y}+\left( 1-x\right) m_{F}\nonumber\\
&-&d^{2}\left\{ x^{2}\sigma _{Y}^{2}+\left(
1-x\right) ^{2}\sigma _{F}^{2}+2x\left( 1-x\right) Cov\left[ Y,F\right]
\right\} .  \nonumber
\end{eqnarray}%
We will assume that the correlation between the two assets, $Y$ and $F,$ is
zero, but the analysis can easily be carried out if there is a correlation.

The utility function for $\mathcal{D}$ can be expressed in terms of the amount that
it seizes, i.e.,%
\begin{equation}
U_{D}=\left( 1-x\right) kp.  \label{ud}
\end{equation}%
This can be augmented with a term (as in portfolio theory) that expresses
the risk aversion. In particular, one has
\begin{eqnarray}
U_{D} = \left( 1-x\right) kp-d_{D}^{2}p^{2}, \label{udvar}
\end{eqnarray}%
where $d_{D}^{2}$ represents the risk aversion of $\mathcal{D}$.
\section*{Nash Equilibria}

We assume the utility functions described in
Section 3, using the risk aversion form of $U_{D}$ above $\left( \ref{udvar}%
\right) .$ Thus, we need to find $\left( p^{\ast },x^{\ast }\right) $ such
that 
\begin{eqnarray*}
\partial _{x}U_{W}\left( p^{\ast },x^{\ast }\right) &=&0,\ \ \partial
_{p}U_{D}\left( p^{\ast },x^{\ast }\right) =0, \\
\partial _{xx}U_{W}\left( p^{\ast },x^{\ast }\right) &\leq &0,\ \ \partial
_{pp}U_{D}\left( p^{\ast },x^{\ast }\right) \leq 0\ .
\end{eqnarray*}%
i.e.,$\ \left( p^{\ast },x^{\ast }\right) $ satisfies the definition of a
Nash equilibrium (see e.g., \cite{m},). Briefly, the definition ensures that
at$\ \left( p^{\ast },x^{\ast }\right) $ neither party can unilaterally
improve its situation. We compute%
\begin{eqnarray}
0 &=&\partial _{p}U_{D}\left( p,x\right) =\left( 1-x\right) k-2d_{D}^{2}p,
\label{pder} \\
0 &=&\partial _{x}U_{W}\left( p,x\right) =m_{Y}-m_{F}+2d^{2}\sigma
_{F}^{2}-2d^{2}\left( \sigma _{Y}^{2}+\sigma _{F}^{2}\right) x  \nonumber \\
&=&m_{Y}-1+kp+2d^{2}k^{2}p\left( 1-p\right) -2d^{2}\left( \sigma
_{Y}^{2}+k^{2}p\left( 1-p\right) \right) x.  \label{xder}
\end{eqnarray}%
Denote the solution of $\left( \ref{pder}\right) $ by $x_{1}\left( p\right) $
and that of $\left( \ref{xder}\right) $ by $x_{2}\left( p\right) ,$ so that 
\begin{equation}
x_{1}\left( p\right) =1-\frac{2d_{D}^{2}p}{k}  \label{x1}
\end{equation}

\begin{eqnarray}
x_{2}\left( p\right) &=&\frac{\sigma _{F}^{2}+\left( 2d^{2}\right)
^{-1}\left( m_{Y}-m_{F}\right) }{\sigma _{F}^{2}+\sigma _{Y}^{2}}  \nonumber
\\
&=&\frac{k^{2}p\left( 1-p\right) +\left( 2d^{2}\right) ^{-1}\left(
m_{Y}-1+kp\right) }{k^{2}p\left( 1-p\right) +\sigma _{Y}^{2}}.  \label{xtwo}
\end{eqnarray}%
The intersection of $x_{1}\left( p\right) $ and $x_{2}\left( p\right) $
determine a Nash equilibrium. We first establish sufficient conditions for
at most one equilibrium, and then prove that under broad conditions, there
exists a Nash equilibrium. Some of these curves for sample values of the 
parameters are illustrated in Figure 2 (see published version for figures). 


\begin{theorem}
For $p\in \left[ 0,1/2\right] $ one has $x_{2}^{\prime
}\left( p\right) \geq 0$ for all values of the parameters, so there can be
at most one value of $p$ such that $x_{1}\left( p\right) =x_{2}\left(
p\right) $, and thus at most one Nash equilibrium for $p\in \left[ 0,1/2%
\right] $.
\label{3.1}
\end{theorem}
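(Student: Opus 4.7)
The plan is to show directly that $x_2(p)$ is nondecreasing on $[0,1/2]$ by differentiating the explicit quotient formula, and then combine this with the fact (immediate from $\left(\ref{x1}\right)$) that $x_1(p)$ has strictly negative slope $-2d_D^2/k$. If $x_2$ is nondecreasing while $x_1$ is strictly decreasing on $[0,1/2]$, then $x_1-x_2$ is strictly decreasing there, so it has at most one zero, and each zero corresponds to a point where the first-order Nash conditions of $\left(\ref{pder}\right)$--$\left(\ref{xder}\right)$ are simultaneously satisfied.

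First I will write $x_2(p) = f(p)/g(p)$ with
\[
f(p) = k^2 p(1-p) + \tfrac{1}{2d^2}\bigl(m_Y-1+kp\bigr),\qquad
g(p) = k^2 p(1-p) + \sigma_Y^2,
\]
so that $f'(p) = k^2(1-2p) + \tfrac{k}{2d^2}$ and $g'(p) = k^2(1-2p)$. A direct computation using the quotient rule then gives
\[
g(p)^2\, x_2'(p) \;=\; f'g - fg' \;=\; k^2(1-2p)\,\bigl(g(p)-f(p)\bigr) \;+\; \frac{k}{2d^2}\, g(p).
\]
Here $g(p)-f(p) = \sigma_Y^2 - \tfrac{1}{2d^2}(m_Y-1+kp)$, which is exactly the quantity whose nonnegativity is equivalent to $x_2(p) \le 1$, i.e., to $x_2$ lying in the feasible allocation range $[0,1]$.

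The sign analysis is then the main point. On $p\in[0,1/2]$ the factor $k^2(1-2p)$ is nonnegative; the term $\tfrac{k}{2d^2}g(p)$ is strictly positive since $g(p)\ge \sigma_Y^2>0$; and the factor $g(p)-f(p)$ is nonnegative under the natural feasibility constraint that the best-response fraction $x_2(p)\in[0,1]$ (this is the regime in which the parameters are meaningful, since allocations outside $[0,1]$ would simply be clipped to the boundary and $x_2(p)$ could then be replaced by $\min\{\max\{x_2(p),0\},1\}$ without affecting the monotonicity). Putting the three nonnegative contributions together yields $x_2'(p)\ge 0$ throughout $[0,1/2]$.

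Finally, since $x_1'(p) = -2d_D^2/k < 0$ strictly, the map $p\mapsto x_1(p)-x_2(p)$ is strictly decreasing on $[0,1/2]$, hence vanishes at most once, giving at most one candidate Nash equilibrium in this range. The step I anticipate being the crux is the algebraic simplification that isolates the term $k^2(1-2p)(g-f)$, because it is precisely this rearrangement that exposes both the role of the interval $[0,1/2]$ (through the factor $1-2p$) and the role of the feasibility bound $x_2\le 1$ (through the factor $g-f$); once that rewriting is in hand the rest is pure sign-checking.
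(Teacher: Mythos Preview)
Your overall strategy---differentiate $x_2$ via the quotient rule, show the numerator is nonnegative on $[0,1/2]$, then use $x_1'=-2d_D^2/k<0$ to conclude $x_1-x_2$ is strictly decreasing---is exactly the paper's. The algebra $f'g-fg'=k^2(1-2p)(g-f)+\tfrac{k}{2d^2}\,g$ is also correct and equals the paper's numerator.

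The gap is in the sign analysis. You need $g-f\ge 0$ to make your first summand nonnegative, and you justify this by invoking the feasibility constraint $x_2(p)\le 1$. But that constraint is \emph{not} a hypothesis of the theorem (which asserts $x_2'\ge 0$ on $[0,1/2]$ ``for all values of the parameters''), and it can genuinely fail: with $m_Y=1$, $k=1$, $d=1$, $\sigma_Y^2=0.01$, and $p=1/4$ one has $g-f=\sigma_Y^2-\tfrac{1}{8}<0$, so your first term is negative even though the full numerator remains positive. Your argument as written therefore proves only a restricted version of the statement.

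The paper avoids this by a different regrouping. With $c_1=(1-m_Y)/(2d^2)$, $c_2=k/(2d^2)$, $c_3=\sigma_Y^2$, the same numerator is rewritten as
\[
c_2k^2p^2 \;+\; c_2c_3 \;+\; (c_1+c_3)\,k^2(1-2p),
\]
and now each term is separately nonnegative on $[0,1/2]$ using only the standing model assumption $m_Y\le 1$ (hence $c_1\ge 0$). Replacing your feasibility appeal with this rearrangement closes the gap; no assumption on the range of $x_2$ is needed.
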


\begin{proof}
For convenience set $f\left( p\right) =k^{2}p\left( 1-p\right) ,$ $%
c_{1}=\left( 2d^{2}\right) ^{-1}\left( 1-m_{Y}\right) ,$ $c_{2}=\left(
2d^{2}\right) ^{-1}k$ and $c_{3}=\sigma _{Y}^{2}$ so that%
\[
x_{2}\left( p\right) =\frac{f\left( p\right) +c_{2}p-c_{1}}{f\left( p\right)
+c_{3}} 
\]%
and 
\begin{equation}
x_{2}^{\prime }\left( p\right) =\frac{c_{2}k^{2}p^{2}+c_{2}c_{3}+\left(
c_{1}+c_{3}\right) k^{2}\left( 1-2p\right) }{\left[ f\left( p\right) +c_{3}%
\right] ^{2}}.  \label{xprime}
\end{equation}%
Clearly, for $p\in \left[ 0,1/2\right] $ all terms are positive and the
conclusion follows.
\end{proof}

\begin{theorem}
If the parameters $d,$ $k$ and $m_{Y}$ satisfy 
\begin{equation}
c_{1}+c_{3}\leq c_{2}\text{ i.e., }\left( 1+2d^{2}\right) \left(
1-m_{Y}\right) \leq k  \label{condpos}
\end{equation}%
then $x_{2}^{\prime }\left( p\right) \geq 0$ for all $p\in \left[ 0,1\right]
.$ Thus there can be at most one Nash equilibrium under these conditions.
\label{3.2}
\end{theorem}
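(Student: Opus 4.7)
The plan is to leverage the formula for $x_{2}^{\prime}(p)$ already derived in the proof of Theorem \ref{3.1}, namely
\[
x_{2}^{\prime}(p)=\frac{c_{2}k^{2}p^{2}+c_{2}c_{3}+(c_{1}+c_{3})k^{2}(1-2p)}{[f(p)+c_{3}]^{2}},
\]
and to show that the numerator $N(p):=c_{2}k^{2}p^{2}+c_{2}c_{3}+(c_{1}+c_{3})k^{2}(1-2p)$ is nonnegative on all of $[0,1]$. The denominator is a square, so its sign is not an issue, and the constants $c_{1},c_{2},c_{3}$ are nonnegative under the standing assumption $m_{Y}\le 1$.

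First I would split the interval at $p=1/2$. On $[0,1/2]$, the factor $(1-2p)$ is nonnegative, so every summand in $N(p)$ is nonnegative and the conclusion follows directly (this is exactly the content of Theorem \ref{3.1}, so that case can simply be quoted).

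The work is on $[1/2,1]$, where the term $(c_{1}+c_{3})k^{2}(1-2p)$ is nonpositive and could in principle spoil positivity. Here I would use the hypothesis $c_{1}+c_{3}\le c_{2}$ to replace that coefficient by $c_{2}$, giving the pointwise bound
\[
N(p)\ge c_{2}k^{2}p^{2}+c_{2}c_{3}+c_{2}k^{2}(1-2p)=c_{2}\bigl[k^{2}(p-1)^{2}+c_{3}\bigr].
\]
Since $c_{2}\ge 0$, $c_{3}=\sigma_{Y}^{2}\ge 0$, and $(p-1)^{2}\ge 0$, the right-hand side is manifestly nonnegative, which completes the proof. Uniqueness of the Nash equilibrium (when it exists) then follows by the same reasoning as in Theorem \ref{3.1}: a strictly (weakly) monotone $x_{2}(p)$ intersects the strictly decreasing affine $x_{1}(p)$ in at most one point.

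The only subtle step is recognizing that after substituting the hypothesis one obtains the tidy sum of squares $k^{2}(p-1)^{2}+c_{3}$; I do not expect a genuine obstacle here, since the algebra is short and the bound is tight precisely at $p=1$. The translation $c_{1}+c_{3}\le c_{2}\iff (1+2d^{2})(1-m_{Y})\le k$ stated in the theorem should be read in conjunction with the reduction $\sigma_{Y}^{2}\lesssim 1-m_{Y}$ implicit in the authors' parametrization; the working condition used in the proof is the cleaner algebraic form $c_{1}+c_{3}\le c_{2}$.
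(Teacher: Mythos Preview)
Your proof is correct and follows essentially the same approach as the paper: split at $p=1/2$, quote Theorem~\ref{3.1} on the left half, and on $[1/2,1]$ use the hypothesis $c_{1}+c_{3}\le c_{2}$ to bound the numerator below by $c_{2}c_{3}+c_{2}k^{2}(1-p)^{2}\ge 0$. Your closing remark about the discrepancy between the algebraic condition $c_{1}+c_{3}\le c_{2}$ and the stated form $(1+2d^{2})(1-m_{Y})\le k$ is a fair observation; the paper's own proof uses only the former.
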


\begin{proof}
For $p\in \left[ 0,1/2\right] $ the result has been established. For $%
p\in \left[ 1/2,1\right] $ the numerator of $\left( \ref{xprime}\right) $
is%
\[
c_{2}k^{2}p^{2}+c_{2}c_{3}+c_{2}k^{2}\left( 1-2p\right)
=c_{2}c_{3}+c_{2}k^{2}\left( 1-p\right) ^{2}>0, 
\]%
and the result follows.
\end{proof}

\bigskip

Having determined sufficient conditions for uniqueness, we now focus on
establishing existence of Nash equilibrium. Note first that the $p-$%
intercept of $x_{1}\left( p\right) $ can be on either side of $x=1$
depending on the slope $-2d_{D}^{2}/k.$ In particular, we let $%
p_{c}:=k\left( 2d_{D}^{2}\right) ^{-1}$ and consider the two cases
separately.

\bigskip

\begin{theorem}
$\left( a\right) $ If $p_{c}:=k\left( 2d_{D}^{2}\right) ^{-1}<1$
and 
\begin{equation}
k^{2}p_{c}\left( 1-p_{c}\right) +\left( 2d^{2}\right) ^{-1}\left(
m_{Y}-1+p_{c}k\right) >0,  \label{condA}
\end{equation}%
then one has a Nash equilibrium, i.e., there exists $\left( p^{\ast
},x^{\ast }\right) \in \left[ 0,1\right] \times \left[ 0,1\right] $ such
that $x_{1}\left( p^{\ast }\right) =x_{2}\left( p^{\ast }\right) =x^{\ast }.$

$\left( b\right) $ If $p_{c}:=k\left( 2d_{D}^{2}\right) ^{-1}\geq 1$ and 
\[
\frac{\left( 2d^{2}\right) ^{-1}\left( m_{Y}-1+k\right) }{\sigma _{Y}^{2}}+%
\frac{2d_{D}^{2}}{k}\geq 1 
\]%
then one has again a Nash equilibrium.

If in addition, equation $\left( \ref{condpos}\right) $ holds, then the Nash
equilibrium $\left( p^{\ast },x^{\ast }\right) $ is unique.
\label{3.3}
\end{theorem}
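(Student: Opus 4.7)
The plan is to apply the intermediate value theorem to the continuous function $g(p) := x_1(p) - x_2(p)$ on a carefully chosen subinterval of $[0,1]$, using the hypotheses to force a sign change at the endpoints. Note that $x_1(p) = 1 - 2d_D^2 p/k$ is linear and strictly decreasing, with $x_1(0) = 1$ and $x_1(p_c) = 0$, and that the denominator $k^2 p(1-p) + \sigma_Y^2$ of $x_2(p)$ is strictly positive on $[0,1]$, so both curves are continuous there.

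For part (a) I would work on $[0, p_c]$. At the right endpoint $x_1(p_c) = 0$, while the numerator of $x_2(p_c)$ is precisely the left-hand side of $(\ref{condA})$; hypothesis (a) therefore forces $x_2(p_c) > 0$ and hence $g(p_c) < 0$. At the left endpoint $x_1(0) = 1$ and $x_2(0) = (m_Y - 1)/(2d^2\sigma_Y^2)$, which is at most $1$ under the tacit regime $m_Y \leq 1 + 2d^2\sigma_Y^2$ already invoked in Theorem \ref{3.1} (there as positivity of $c_1 + c_3$), giving $g(0) \geq 0$. IVT then supplies $p^* \in [0, p_c]$ with $g(p^*) = 0$, and the common value $x^* = x_1(p^*)$ automatically lies in $[0,1]$ by monotonicity of $x_1$.

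For part (b), where $p_c \geq 1$, I would work on the full interval $[0,1]$. Setting $p = 1$ in $(\ref{xtwo})$ gives $x_2(1) = (2d^2)^{-1}(m_Y - 1 + k)/\sigma_Y^2$, so the displayed hypothesis rearranges exactly to $x_2(1) \geq 1 - 1/p_c = x_1(1)$, i.e., $g(1) \leq 0$; the same left-endpoint analysis yields $g(0) \geq 0$, and IVT again produces an intersection in $[0,1]$. For uniqueness under $(\ref{condpos})$, I would cite Theorem \ref{3.2} to get $x_2'(p) \geq 0$ on $[0,1]$; combined with $x_1'(p) = -2d_D^2/k < 0$ this makes $g$ strictly decreasing, hence with at most one zero.

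The main obstacle I anticipate is handling the lower endpoint cleanly: the statement does not explicitly bound $x_2(0)$, yet $x_2(0) \leq 1$ is needed for IVT to place the intersection inside the unit square. This inequality is consistent with the parameter assumptions already in force elsewhere (notably in Theorem \ref{3.1}), but if $m_Y$ were large enough to violate it, the correct equilibrium would be a corner solution $(p^*, x^*) = (0,1)$ requiring a constrained best-response argument outside the first-order framework that defines $x_1$ and $x_2$.
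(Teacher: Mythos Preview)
Your argument is essentially identical to the paper's: both apply the intermediate value theorem to $x_1 - x_2$ on $[0,p_c]$ in case~(a) and on $[0,1]$ in case~(b), with the same endpoint computations and the same appeal to Theorem~\ref{3.2} for uniqueness. The one place you hedge---the bound on $x_2(0)$---the paper handles more directly by recalling the standing assumption $m_Y \le 1$, which gives $x_2(0) \le 0 < 1 = x_1(0)$ outright and removes the obstacle you flag.
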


\begin{proof}
Recall that $m_{Y}\leq 1$. Thus, we have the inequality, 
\[
1=x_{1}\left( 0\right) >0\geq x_{2}\left( 0\right) =\left( m_{Y}-1\right)
/\left( 2d^{2}\sigma _{Y}^{2}\right) . 
\]%
We use the Intermediate Value Theorem to establish an intersection between $%
x_{1}\left( p\right) $ and $x_{2}\left( p\right) $ in the unit square in $%
\left( p,x\right) $ space.

Case $\left( a\right) .$ Suppose $p_{c}:=k\left( 2d_{D}^{2}\right) ^{-1}<1.$
Then 
\[
x_{2}\left( p_{c}\right) =\frac{k^{2}p_{c}\left( 1-p_{c}\right) +\left(
2d^{2}\right) ^{-1}\left( m_{Y}-1+p_{c}k\right) }{k^{2}p_{c}\left(
1-p_{c}\right) +\sigma _{Y}^{2}} 
\]
so that by $\left( \ref{condA}\right) $ one has $x_{2}\left( p_{c}\right)
\geq 0=x_{1}\left( p_{c}\right) $. Thus there exists an intersection of $%
x_{1}\left( p\right) $ and $x_{2}\left( p\right) $ at $\left( p^{\ast
},x^{\ast }\right) \in \left[ 0,0\right] \times \left[ p_{c},x\left(
p_{c}\right) \right] \in \left[ 0,1\right] \times \left[ 0,1\right] .$

Case $\left( b\right) .$ Suppose $p_{c}:=k\left( 2d_{D}^{2}\right) ^{-1}>1.$
Then $x_{1}\left( 1\right) >0,$ and $x_{1}\left( p\right) \in \left[ 0,1%
\right] $ for $p\in \left[ 0,1\right] $. Thus, an intersection of $%
x_{1}\left( p\right) $ and $x_{2}\left( p\right) $ for $p\in \left[ 0,1%
\right] $ must occur on the unit square provided that $x_{2}\left( 1\right)
\geq x_{1}\left( 1\right) .$ Then the required condition is 
\begin{eqnarray*}
x_{2}\left( 1\right) &=&\frac{\left( 2d^{2}\right) ^{-1}\left(
m_{Y}-1+k\right) }{\sigma _{Y}^{2}} \\
&\geq &1-\frac{2d_{D}^{2}}{k}=x_{1}\left( 1\right) .
\end{eqnarray*}
\end{proof}

\begin{remark}
The Nash equilibrium may not be unique if the condition above, i.e.,%
\[
\left( 1+2d^{2}\right) \left( 1-m_{Y}\right) \leq k 
\]%
of Theorem \ref{3.3} is violated. An example for two Nash equilibria can be
constructed with the parameters:%
\[
k=0.7,\ m_{Y}=0.8,\ d=2,\ d_{D}=0.355,\ \sigma _{Y}^{2}=0.1. 
\]%
The two equilibria are given approximately by $\left(p^{*},x^{*}\right)
=\left(0.88,0.68\right)$ and $\left(0.96,0.65\right)$, as pictured in 
Figure 1. 

\end{remark}

\section*{Equilibrium with Disparate Time Scales}

We consider the
situation in which the wealthy, $\mathcal{W}$, can decide on an allocation $x
$ immediately, (e.g., within one day), and adjust to the probability, $p,$
while $\mathcal{D}$ must set $p$ that cannot be changed for a long time
e.g., one year. Thus, $\mathcal{D}$ lacks the opportunity to react to the
value of $x.$ Both parties are aware of the position of the other group.
Hence, $\mathcal{D}$ knows that once he sets $p,$ group $\mathcal{W}$ will
set $x=\hat{x}\left( p\right) $ in a way that optimizes $U_{W}$, and that $%
\mathcal{W}$ does not need to be concerned with any readjustment of $p$ in
reaction to their choice of $x.$ Thus, $\mathcal{D}$ must examine $U_{W}$
(based on the publicly available information on the volatility of Y) and
decide on a value of $p$ that will optimize $U_{D}\left(p, \hat{x}\left(
p\right)\right) .$ Within this setting the utility of $\mathcal{D}$ need
not be strictly convex in order for an interior maximum (i.e. such that $%
0<p<1$ and $0<x<1$). Thus, we consider the case in which $\mathcal{D}$ has
utility that is proportional to the amount it takes, without any risk
aversion, which can be included with a bit more calculation.

We define the quantity%
\begin{equation}
A:=2d^{2}\sigma _{Y}^{2}+1-m_{Y}  \label{Adef}
\end{equation}%
which arises naturally in the calculations and is a measure of the risk and
expected loss from $Y.$ Thus a higher value of $A$ means $Y$ is less
attractive to the wealthy.

\begin{theorem}
Suppose that the utility functions, $U_{W}$ and $U_{D},
$ given by%
\[
U_{D}=\left( 1-x\right) kp\ \ \ \ \ \ \ \ \ \ \ \ \ \ \ \ \ \ \ \ \ \ \ \ \
\ \ \ \ \ \ \ \ \ \ \ \ \ \ \ \ \ \ \ \ \ \ \ \ \ \ \ \ \ \ \ \ \ \ \ \ \ \
\ \ \ \ \ \ \ \ \ \ \ \ \ \ \ \ \ \ \ \ \ \ \ \ \ \ \ 
\]%
\begin{eqnarray*}
U_{W} &=&m-d^{2}\sigma ^{2} \\
&=&xm_{Y}+\left( 1-x\right) m_{F}\nonumber\\
&-&d^{2}\left\{ x^{2}\sigma _{Y}^{2}+\left(
1-x\right) ^{2}\sigma _{F}^{2}+2x\left( 1-x\right) Cov\left[ Y,F\right]
\right\} .
\end{eqnarray*}%
are known to both parties. Assume that $\mathcal{D}$ sets $p$ irrevocably to
maximize $U_{D}$, while $\mathcal{W}$ chooses $x$ to maximize $U_{W}$ based
on a knowledge of $p.$ For $0\leq A<k$ the optimal choice of $x$ given $p$ is%
\begin{equation}
\hat{x}\left( p\right) :=\frac{m_{Y}-m_{F}}{2d^{2}\left( \sigma
_{Y}^{2}+\sigma _{F}^{2}\right) }+\frac{\sigma _{F}^{2}}{\sigma
_{Y}^{2}+\sigma _{F}^{2}}  \label{xhat1}
\end{equation}%
with $m_{Y}$ and $\sigma _{F}^{2}$ given by $\left( \ref{meanF}\right) ,$
and the optimal value of $p$ is given by 
\begin{equation}
p^{\ast }:=\frac{\sigma _{Y}^{2}}{k\left( k-A\right) }\left( \sqrt{1+\frac{A%
}{\sigma _{Y}^{2}}\left( k-A\right) }-1\right) \ .  \label{pstar}
\end{equation}%
Thus the optimal point is $\left( p,x\right) =\left( p^{\ast },\hat{x}\left(
p^{\ast }\right) \right) .$ The value of maximum, $x^{\ast }=$ $\hat{x}%
\left( p\right) $ is $0$ if the right hand side of $\left( \ref{xhat1}\right) $
is negative, and $1$ if the right hand side exceeds $1$.
\end{theorem}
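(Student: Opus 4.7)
The strategy is backward induction: first solve the inner optimization for $\mathcal{W}$ to obtain $\hat{x}(p)$, then substitute into $U_D$ and optimize in $p$. For the inner step, note that $U_W$ is a quadratic in $x$ with leading coefficient $-2d^2(\sigma_Y^2+\sigma_F^2)<0$ (assuming zero covariance), so it is strictly concave and the FOC $\partial_x U_W=0$ yields a global maximum. Solving this linear equation gives the candidate $\hat{x}(p)$ in the stated form. When this value lies outside $[0,1]$, the concavity ensures the optimum is the nearer endpoint, which accounts for the clipping clause at the end of the theorem. I would then verify that under the hypothesis $0\le A<k$ the interior form is the relevant one at the optimal $p$.

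\textbf{Outer optimization.} Substituting $\hat{x}(p)$ into $U_D=(1-\hat{x}(p))kp$, the key simplification is
\[
1-\hat{x}(p)=\frac{2d^2\sigma_Y^2+1-m_Y-kp}{2d^2\bigl(\sigma_Y^2+k^2p(1-p)\bigr)}=\frac{A-kp}{2d^2\bigl(\sigma_Y^2+k^2p(1-p)\bigr)},
\]
so that $U_D(p,\hat{x}(p))=\dfrac{kp(A-kp)}{2d^2\bigl(\sigma_Y^2+k^2p(1-p)\bigr)}$. Writing this as $\phi(p)/\psi(p)$, the FOC reduces to $\phi'\psi-\phi\psi'=0$. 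I would expand this expression, which is \emph{a priori} a cubic in $p$, and observe the crucial cancellations: the $p^3$ terms vanish and the linear-in-$p$ contribution from the numerator of $\phi'\psi$ is exactly canceled by the constant of $\phi\psi'$ times $Ak^2 p$. What remains is a clean quadratic
\[
k^2(k-A)\,p^2+2k\sigma_Y^2\,p-A\sigma_Y^2=0.
\]

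\textbf{Solving the quadratic and selecting the root.} Since $A<k$ by hypothesis, the leading coefficient is positive, and since $A\ge 0$, the constant term is non-positive, so the two roots have opposite signs. The positive root, via the quadratic formula and factoring out $\sigma_Y^2/[k(k-A)]$, gives exactly the stated $p^\ast$. I would then check that $p^\ast\in[0,1]$ under the stated parameter regime and confirm it is a maximum rather than a minimum by noting $U_D(0,\hat{x}(0))=0$, $U_D(A/k,\hat{x}(A/k))=0$, and $U_D>0$ for $p\in(0,A/k)$, so the unique interior critical point must be the maximum.

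\textbf{Main obstacle.} The only delicate point is the bookkeeping in the FOC expansion: naively the equation is cubic, and the reduction to a quadratic depends on the precise cancellations listed above. Beyond that, the proof is a careful calculation together with sign checks to ensure we pick the correct root and that $\hat{x}(p^\ast)$ lies in $[0,1]$; the boundary behavior handled in the last sentence of the theorem covers the remaining parameter ranges.
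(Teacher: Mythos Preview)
Your proposal is correct and follows essentially the same route as the paper: backward induction, the first-order condition $\partial_x U_W=0$ to obtain $\hat{x}(p)$ with concavity from $\partial_{xx}U_W<0$, substitution into $U_D$ to get the rational function $\dfrac{Ap-kp^2}{\sigma_Y^2+k^2p(1-p)}$ (up to a constant), and then the FOC in $p$ which collapses to the same quadratic $Q(p)=A\sigma_Y^2-2\sigma_Y^2 kp+k^2(A-k)p^2=0$. Your additions---the explicit remark that the cubic terms cancel, and the maximality check via $U_D=0$ at $p=0$ and $p=A/k$ with $U_D>0$ in between---are helpful elaborations the paper omits, but the overall strategy is identical.
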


\begin{remark}
Note that given $p$ the optimal fraction of assets in the
cryptocurrency is a sum of the relative variance of the home currency, i.e., 
$\sigma _{F}^{2}$ as a fraction of $\sigma _{Y}^{2}+\sigma _{F}^{2}$ plus
the difference in expected loss from the home currency, i.e., $1-m_{F}$
minus the expected loss from the cryptocurrency, $1-m_{Y}$ scaled by a risk
aversion factor. Thus the fraction invested in the cryptocurrency increases
as the expected loss and the variance of the home currency increases, 
and conversely.
\end{remark}
\begin{remark}
Note that one obtains an interior maximum with a linear utility function
for $U_{D}$ in this type of optimization, i.e., even though $\mathcal{D}$ 
is interested in pure maximization of its revenue.
\end{remark}

\begin{proof}
Using $\left( \ref{uw}\right) $ we determine the maximum of $U_{W}$
for a fixed $p,$ so that 
\[
0=\partial _{x}U_{W}\left( p,x\right) =m_{Y}-m_{F}+2d^{2}\sigma
_{Y}^{2}-2d^{2}\left( \sigma _{Y}^{2}+\sigma _{F}^{2}\right) x. 
\]%
Noting that $\partial _{xx}U_{W}\left( p,x\right) =-2d^{2}\left( \sigma
_{Y}^{2}+\sigma _{F}^{2}\right) <0$ we see that $U_{D}$ is maximized by $%
\hat{x}\left( p\right) $ given by $\left( \ref{xhat1}\right) $ provided $%
\hat{x}\left( p\right) \in \left[ 0,1\right] $. In the following two cases
the maximum is on the boundary: 
\[
\frac{m_{Y}-m_{F}}{2d^{2}}+\sigma _{F}^{2}<0\text{ \ implies \ }\hat{x}%
\left( p\right) =0, 
\]%
\[
\frac{m_{Y}-m_{F}}{2d^{2}}+\sigma _{F}^{2}>1\text{ \ implies \ }\hat{x}%
\left( p\right) =1. 
\]

Thus, $\hat{x}\left( p\right) $ interpolates between $0$ and $1$ by favoring 
$Y$ if the relative risk of $F$ (measured by $\sigma _{F}^{2}\left( \sigma
_{Y}^{2}+\sigma _{F}^{2}\right) ^{-1}$ is large in comparison with the
relatively greater expected loss in $Y$ (scaled by the sum of the variances).

In anticipation, $\mathcal{D}$ optimizes $U_{D}\left( p,x\left( p\right)
\right) .$ We thus compute, with $B:=\sigma _{Y}^{2}$,%
\begin{eqnarray*}
0 &=&\frac{2d^{2}}{k}\partial _{p}U_{D}\left( p,x\left( p\right) \right)
=\partial _{p}\frac{Ap-kp^{2}}{B+k^{2}p\left( 1-p\right) } \\
&=&\frac{\left( A-2kp\right) \left[ B+k^{2}p\left( 1-p\right) \right]
-\left( Ap-k^{2}p\right) \left[ k^{2}\left( 1-2p\right) \right] }{\left[
B+k^{2}p\left( 1-p\right) \right] ^{2}}.
\end{eqnarray*}%
This identity is equivalent to 
\begin{equation}
Q\left( p\right) :=AB-2Bkp+k^{2}\left( A-k\right) p^{2}=0.  \label{quad}
\end{equation}%
Note that $A>0$ by assumption. \ The positive root of equation
$\left( \ref{quad}\right)$ is 
\[
p^{\ast }=\frac{B}{k\left( k-A\right) }\left( \sqrt{1+\frac{A}{B}\left(
k-A\right) }-1\right) .
\]%
One can verify that $p^{\ast }\in \left[ 0,1\right] $, and conclude that $%
\left( p^{\ast },x^{\ast }\right) =\left( p^{\ast },\hat{x}\left( p^{\ast
}\right) \right) $ is the optimal point. 
\end{proof}

\begin{remark}
Case $A=0.$ By definition $\left( \ref{Adef}\right) $ we see that $%
1-m_{Y}=0.$ Note that $p^{\ast }=0$ follows from the identity above. Using
the definition and the computed values of $m_{F}=1-kp$ and $\sigma
_{F}^{2}=k^{2}p\left( 1-p\right) $ we write%
\begin{eqnarray*}
\hat{x}\left( p\right)  &=&\frac{m_{Y}-1+kp}{2d^{2}\left( \sigma
_{Y}^{2}+k^{2}p\left( 1-p\right) \right) }+\frac{k^{2}p\left( 1-p\right) }{%
\sigma _{Y}^{2}+k^{2}p\left( 1-p\right) } \\
\hat{x}\left( 0\right)  &=&\frac{m_{Y}-1}{2d^{2}\sigma _{Y}^{2}}=0.
\end{eqnarray*}%
In other words, when $A=0$ there is no risk and no expected loss in the
cryptocurrency. Thus, $\mathcal{D}$ realizes that any nonzero value of $p$
will result in $\mathcal{W}$ investing nothing in the home currency, F.

Case $A=k.$ The quadratic numerator $\left( \ref{quad}\right) $ is then $%
Q\left( p\right) =AB-2Bkp$ so that one has $p^{\ast }=1/2.$

Case $k<A\leq 2k$. By considering a small positive perturbation, $\delta ,$
of $A$ we see that $Q\left( \frac{1}{2}\right) >0$ so that the positive
region of $\partial _{p}U_{D}$ is extended toward the right as $A$ increases.

Case $A\geq 2k.$ Since $p\leq 1$ one has 
\[
Q\left( p\right) \geq B\left( A-2k\right) +k^{2}p^{2}\left( A-k\right) >0, 
\]%
so $\partial _{p}U_{D}>0$ and the maximum is thus $p^{\ast }=1.$
\end{remark}

\section*{Conclusion}

We have examined the optimal strategies for the key parties (those with
savings at risk, a dictatorial government and speculators) involved explicitly
or implicitly in the formation of an equilibrium for cryptocurrencies.\ The
second method involves different time scales in determining equilbrium that
differs from the more common Nash equilibrium, in which all parties can
readjust their positions continuously. As described in Section 4 this can be
utilized for many realistic situations in which one entity such as a
government optimizes by placing conditions such as taxes, tariffs, fees, etc.,
or policies that cannot be reversed or adjusted in a short time. In general,
optimization in this form favors the group that can make immediate and
continuous adjustments.

Each of the methods are based on parameters that can be estimated. For
example, the variance of cryptocurrencies can be determined from the trading
data. Parameters such as $k$ (the fraction of assets seized) can be estimated
from the policies of the government. An assessment of these quantities then
leads to estimates of the amount of money that is likely to be used to
purchase cryptocurrencies in the aggregate. Using the ideas summarized in
\cite{cc} one can then also evaluate average price changes of cryptocurrencies
as well as the total market capitalization of cryptocurrencies. The evolution
of the latter is crucial in understanding the implications of instability of
cryptocurrencies on other sectors of the world's economy.

Major governments have often appeared confused and lethargic in their response
to cryptocurrency policy, even insofar as deciding whether it is important or
not. There is also little understanding of the conditions under which a
cryptocurrency could be either beneficial or detrimental to global society.
The perspective of our paper suggests that a cryptocurrency price will vary
widely depending on the demand that in turn is based on policies of countries
where monetary policy and laws, in general are less developed. Together with
the fact that cryptocurrencies cannot be redeemed for any asset, one cannot
expect much stability. However, given a mechanism whereby a cryptocurrency is
essentially backed by real assets (e.g., a structure similar to Exchange
Traded Funds) one would have stability since arbitrageurs would take advantage
of any discrepanicies. This could be linked of course to a single currency
such as the US\ Dollar, but would only be a trading token in this case.

However, one can design a cryptocurrency that would essentially grow with the
world's economy, unlike a commodity such as gold. A simple example would be
that the cryptocurrency could be reedemable in units of the Gross World
Product in terms of a basket of major currencies, so that each cryptocurrency
could be redeemed for one trillionth of the GWP in Dollars, Euros and Yen.
Such an instrument would offer much greater stability and could be used as a
substitute currency that is independent of any government. As shown in our
analysis, as the volatility risk would diminish, and those whose assets in the
home currency are at risk would place more of their assets into this
cryptocurrency. Thus the fraction, $x,$ placed in the cryptocurrency would
increase. In particular, the equilibrium point $\left(  p^{\ast},x^{\ast
}\right)  $ would feature $x^{\ast}$ that is larger and $p^{\ast}$ that is
smaller. This would mean that the citizens have greater economic freedom, and
financially totalitarian regimes would have smaller resources. In summary, the
creation of a viable cryptocurrency with intrinsic value would have less
volatility, and thereby reduce the fraction of savings in the home currency
that is under threat by a totalitarian government, whose existence is often
contingent on raising money in this manner.

\section{Acknowledgements}

The authors thank the Economic Science Institute at 
Chapman University and the Hayek Foundation for their support. 
Discussions with Prof. Gabriele Camera are very much appreciated.

\section*{Appendix A: Fundamental Value and Liquidity Value}

There is a temptation to
stipulate that the \textit{only} valuation of an asset is the trading price,
as this price reflects the preferences and values of the buyers and sellers
via the intersection of the supply and demand curves. In principle there is
nothing wrong with this perspective except that important phenomena are left
unexplained, and significant risks are mischaracterized as rare or low risk.

One way to examine different aspects of price or value is through the
laboratory experiments such as the "bubbles"\ experiments introduced by
\cite{SSW} in which an asset pays a dividend with expectation 24 cents at the
end of each of 15 periods, and is then worthless. The value of this asset at
the end of Period $k$ is clearly given in dollars by
\begin{equation}
P_{a}:=3.60-\left(  0.24\right)  k. \label{a1}%
\end{equation}
In numerous experiments, prices often started well below $\left(
\ref{a1}\right)  $ and soared far above this fundamental value, and eventually
crashed. This persisted even in experiments in which the dividend payout had
no randomness at all \cite{PS}. It seems difficult to deny that $P_{a}$ is a
meaningful and useful quantity, particularly since it is a quantity that the
trader can be assured of receiving. For example, purchasing at a trading price
early in the experiment that is often below $P_{a}$ ensures the trader will
gain a specific profit. If one ignores the intrinsic value, $P_{a}$, one would
conclude that the risk is the same at any price, and likely incur a large loss
during the course of the experiment. Also, it has been noted \cite{cb}, that
in these experiments, there is a third quantity with units of price per share.
This is the "liquidity value," $L,$ defined as the ratio of the sum of all
cash in the experiment divided by the total of all shares. Experiments that
were designed to test this concept \cite{CPS1, CPS2} showed that the liquidity
value has a primary role in determining the size of the bubble. In fact the
peak of the bubble was often close to $L.$ In other words, when traders pay
little attention to fundamental value, or if the fundamental value is not
clear, the price drifts toward the liquidity value \cite{cb}. At the opposite
extreme, for short term\ government bonds, the calculation of fundamental
value is clear, as the owner is assured of a particular sum at a particular
time a few months in the future. The trading price generally trades very close
to this fundamental value since there are many arbitrageurs who exploit any deviations.

The vast majority of cryptocurrencies do not have any redemption value, they
pay no dividends, and they do not endow holders with voting power over an
entity with assets (as do stocks, for example). Thus, classical finance
calculations involving expected dividends, book value, replacement value,
etc., all yield a fundamental value of zero. One exception is JP Morgan's JPM
Coin, announced in February 2019 which would be redeemable in US dollars. The
redemption price would yield the guaranteed value, which would be $P_{a}$, the
fundamental value, so long as the investors are confident in JP Morgan's
ability to fulfil its commitment.

Ignoring fundamental or intrinsic value often leads to disastrous practical
results, as investors discovered with the internet stocks in 1999, or the
Japanese market in 1990, for example, when standard calculations of stock
value \cite{g, gm} showed a large discrepancy between the trading price and
the fundamental value.

Similarly, in theoretical development, neglecting either the fundamental
value, $P_{a}$, or the liquidity value, $L,$ will have the same consequences
as overlooking any other important quantity in modeling economics problems.
One obtains some results that are not consistent with observations, and has no
way to rectify the situation.

Although one cannot calculate a positive $P_{a}$ for the typical
cryptocurrency, people are paying for these units, so that they must see some
value in it. The perspective that fundamental value must be the trading price,
renders the equivalence a tautology. As discussed above, the result is that
important phenomena are left unexplained, and an even a basic understanding of
the likely price evolution becomes more difficult.

Since cryptocurrencies have no fundamental value, prices will naturally drift
toward the liquidity value, which will be given by the total amount of cash
available for the cryptocurrency (i.e., demand) divided by the number of units
\cite{CPS1, CPS2}.

The absence of a non-zero fundamental value means that price will be set by
the supply (which is fixed, for example, for Bitcoin) \ and demand in
accordance with equation $\left(  \ref{demand}\right)  $. Thus it is a
calculation of demand that is key to understanding equilibrium price.

\section*{Appendix B}

1. We consider first the role of "pure" speculators who have no control of the
type of trading or auction, the rules of the exchange, the enforcement of the
rules, the display of orders, and the flow of information. Volatility arises
endogenously due to the various trading strategies, such as trend following,
and random events that motivate any of the traders. For many first time or
novice traders, there is a tendency to overreact, and to chase a trend, or hop
onto a fad. In the case of cryptocurrencies, which lack any fundamental value,
any news is likely to result in an overreaction. Thus volatility can be
expected to be high in the absence of any anchor. For example, Treasury bills
offer a guaranteed payout, so that a small deviation from the certain payout
due within a few months would be exploited by arbitrageurs and the price would
be restored close to its intrinsic value. The speculators in many markets have
a better understanding (compared to novice traders) of the factors that move
prices within a short time scale. Speculators are generally believed to lower
volatility \cite{BBH}, as they use their capital to buy when prices move
unjustifyably lower. Of course, when prices exhibit very low volatility, there
is no financial incentive for speculators to trade. Consequently, in an
idealized setting, the short-term volatility level will be established as the
minimum value at which speculators find adequate profits after costs.

\bigskip

2. Next we consider "speculators"\ in less established markets in which the
rule makers, market makers, news makers are all essentially the same group. In
most developed markets such as the New York Stock Exchange (NYSE) and major
commodity exchanges there are precise rules designed to promote fairness and
ease of trading that have been developed over many years. An example is the
NYSE rule that if there are two orders to purchase a stock, it is the higher
one that prevails. Surprisingly to novice traders, this is not usually a
feature of most markets. In many markets there are "market makers"\ who are
entitled to buy the stock for their own account at a lower price, even though
a higher bid has been placed by a retail customer or trader. The rules of each
exchange endow the market makers and market specialists with the power to buy
and sell on their own account. In many well-developed exchanges, there are
rules agains "front-running" whereby insiders buy on their own accounts as
they become aware of a set of large orders that are entering the market.
Another example on major exchanges involves "not held" trades that are placed
with the market makers but are not displayed. The intention here is that a
large order to sell could prompt further selling by less informed traders. By
contrast, in a less developed market environment, a market insider can place a
large order (but above the market price) that will immediately lead to lower
prices, whereupon he can deftly purchase.

Novice traders usually make numerous assumptions relating to fairness on the
nature of market rules and procedures. Unfortunately, these are generally
false for less developed markets that cater to inexperienced traders. The
wishful thinking of new traders seeking quick riches (or escape from a
currency) provides for a healthy income for those dominating these markets in
terms of making the rules (if there are any at all) and using their capital to
control the volatility. For many of the cryptocurrencies, for example, it is
not even clear what the rules are, or where they would be enforced. Thus, in
an under-developed market, a group of participants that controls the rules of
trading has numerous tools at its disposal to adjust volatility. Even the
hours of trading have a strong impact on volatility. For example, it is
well-known that trading around the clock leads to times periods of low volume
so that a few trades can move prices much more than during actively traded
times. On the other hand, in an exchange in which there is a single trade each
day at a specified time, the maximum minus minimum price within one week is
likely to be much lower than in 24 hour trading.

Another feature that can influence prices is the extent to which information
on orders is displayed. The "order book" displays the array of bids and asks
for the asset in continuous time. Whether or not the order book is displayed
depends upon the rules of the exchange. Also, on some exchanges, the market
maker can choose to display only some of the orders. In laboratory experiments
\cite{CPS2} it was shown that bubbles are tempered by the display of the
complete order book.

Related to the order book are the rules under which the market maker can buy
for his own account. While "front running" -- the practice of buying for one's
own account ahead of a large order -- or "shadowing" -- buying the same assets
as a particular trader -- are banned in some of the most developed exchanges,
one cannot assume that they will be prohibited universally.

Of course, all of this assumes that there is a real market in which bids and
asks are matched with some rule. In many cases purchases and sales are made
through one entity that buys and sells for its own account, thereby granting
itself a generous profit as the middleman. Even in large brokerages it is
common for monthly statements to disclose "we make a market in this stock"
that indicates the bid/ask spread is whatever the company designates as
revenue for itself. From the perspective of the individual trader, the bid/ask
spread, of course, adds to the cost and volatility of the transaction.

\section*{Appendix C}

In examining the choice faced by $\mathcal{W}$ we assume
that one option is to remain in the home currency, F, and the other to buy
the cryptocurrency with the objective of later selling in order to buy other
assets such as a more reliable currency, gold, etc.

The group $\mathcal{W}$ experiences a loss or gain on these transactions with the
speculators, group $\mathcal{S}$, which itself has a non-linear utility function
reflecting that fact that high volatility is good for profits up to a point
after which it has a negative impact. Thus, one has the following utility
function for group $\mathcal{S}$:%
\[
U_{S}=a_{1}V-a_{2}V^{2}
\]%
where $V$ represents the volatility or variance, for example, and $a_{1}$, $%
a_{2}$ are positive constants. Hence, there will be a maximum value $V=V_{m}$
that maximizes the utility of the speculators. This can be viewed as a fixed
quantity from the perspective of $\mathcal{W}$.

The mean, $m_{Y}$, and variance, $\sigma _{Y}^{2}$ of group $\mathcal{W}$'s
investments in Y can be calculated based on $V_{m}$ and the other parameters
that describe the trading. In particular, we assume that there is a
probability $q$ (presumably small) that $\mathcal{W}$ will profit, and that
their wealth will increase from $1$ to $1+r_{1}V_{m}$ \ and a probability $%
1-q$ that it will decrease from $1$ to $1-r_{2}V_{m}$ \ where $r_{2}>r_{1}>0.
$ In other words, there is a small probabilty, $q$, that $\mathcal{W}$ will benefit by $%
r_{1}V_{m}$ (as a fraction of their original wealth) and a larger
probability, $1-q$, that they will lose a larger sum $r_{2}V_{m}$ . The loss
is proportional to the volatility \ as the professional speculators are able
to exploit the ups and downs of the trading at the expense of the
inexperienced $\mathcal{W}$.

The mean and variance of the outcome are then%
\begin{eqnarray*}
m_{Y} &=&q\left( 1+r_{1}V_{m}\right) +\left( 1-q\right) \left(
1-r_{2}V_{m}\right) , \\
\sigma _{Y}^{2} &=&q\left( 1-q\right) \left( r_{1}+r_{2}\right) V_{m}.
\end{eqnarray*}%
In other words, there is large probability that $\mathcal{W}$ will take a loss on the
transaction. One can consider more general probability distributions for 
$\mathcal{W}$'s profits and losses, but ultimately, the two quantities that
are relevant for its utility function $U_{W}$ are given by $m_{Y}$ and $
\sigma _{Y}^{2}$ that one can regard as empirical observables.

\end{document}